\definecolor{darkgreen}{rgb}{0.0,0,0.9}
\newtheorem{theorem}{Theorem}[section]
\newtheorem{lemma}{Lemma}[section]
\newtheorem{obs}{Observation}[section]
\newcommand{\g}{\mathcal{R}}
\newcommand{\ml}{\mathcal{B}}
\newcommand{\np}{\mathsf{NP}}
\newcommand{\apx}{\mathsf{APX}}
\newcommand{\opt}{\mathsf{OPT}}
\newcommand{\ptas}{\mathsf{PTAS}}
\title{Constrained Orthogonal Segment Stabbing}
\author[1]{Sayan Bandyapadhyay}
\author[2]{Saeed Mehrabi}
\affil[1]{{\small Department of Computer Science, University of Iowa, Iowa City, USA.

\texttt{sayan-bandyapadhyay@uiowa.edu}}
}
\affil[2]{{\small School of Computer Science, Carleton University, Ottawa, Canada.

\texttt{saeed.mehrabi@carleton.ca}}
}
\date{}
\begin{document}

\maketitle

\begin{abstract}
Let $S$ and $D$ each be a set of orthogonal line segments in the plane. A line segment $s\in S$ \emph{stabs} a line segment $s'\in D$ if $s\cap s'\neq\emptyset$. It is known that the problem of stabbing the line segments in $D$ with the minimum number of line segments of $S$ is $\np$-hard~\cite{KatzMN05}. However, no better than $O(\log |S\cup D|)$-approximation is known for the problem.  In this paper, we introduce a constrained version of this problem in which every horizontal line segment of $S\cup D$ intersects a common vertical line. We study several versions of the problem, depending on which line segments are used for stabbing and which line segments must be stabbed. We obtain several $\np$-hardness and constant approximation results for these versions. Our finding implies, the problem remains $\np$-hard even under the extra assumption on input, but small constant approximation algorithms can be designed.
\end{abstract}

\section{Introduction}
\label{sec:introduction}
Let $S$ and $D$ be two sets of orthogonal line segments in the plane. In this paper, we study the \textit{orthogonal segment stabbing} problem, where the goal is to find a minimum-cardinality subset $S'\subseteq S$ such that every line segment in $D$ is \emph{stabbed} by at least one line segment in $S'$. A line segment $s\in D$ is stabbed by a line segment $s'\in S$ if and only if $s\cap s'\neq\emptyset$. Let $H$ and $V$ denote the set of input horizontal and vertical line segments, respectively, and $n=|H\sqcup V|$\footnote{Throughout the paper we use $\sqcup$ to denote disjoint union.}.

The orthogonal segment stabbing problem was studied by Katz et al.~\cite{KatzMN05} who proved that the problem is $\np$-hard even in the case when $S$ contains only vertical line segments and $D$ contains only horizontal line segments, i.e, $S=V$ and $D=H$. Notice that the problem is trivial when $S,D \subseteq V$ or $S,D \subseteq H$. Moreover, an $O(\log n)$-approximation algorithm for the problem is straightforward (by reducing the problem to set cover). To the best of our knowledge, no other approximation or inapproximabilty results are known for the general version of the problem. However, two special versions of the problem have been studied very recently. Bandyapadhyay and Basu Roy \cite{BandyapadhyayR17} studied a related art gallery problem, and from their work it follows that one can get a $\ptas$ for the orthogonal segment stabbing problem if $S\subseteq H$ or $S\subseteq V$. Mehrabi \cite{Mehrabi17} considered the version where no two horizontal (resp. vertical) line segments intersect each other (i.e., the intersection graph of the line segments in $S\cup D$ is bipartite, but $S,D \subseteq H\sqcup V$), and obtained a constant approximation.

\begin{center}
\begin{table}[t]
\centering
\begin{tabular}{ |p{1cm}|p{2cm}|p{2cm}|p{2.5cm}| }
 \hline
 \diagbox[height=8ex, width=3.7em]{\raisebox{0.3\height}{\enspace $S$}}{ \raisebox{-1\height}{\ $D$}}
 & $H$ & $V$ & $H\sqcup V$ \\ [1ex]
 \hline
\multirow{2}{*}{$H$} & \multirow{2}{*}{Polytime} & {$\np$-hard}  & {$\np$-hard} \\[1ex]
& (trivial) & {$\ptas$~\cite{BandyapadhyayR17}} & {$\ptas$~\cite{BandyapadhyayR17}}\\[1ex]
\hline
\multirow{2}{*}{$V$} & $\np$-hard$^{1}$ & \multirow{2}{*}{Polytime} & \multirow{2}{*}{$\ptas$~\cite{BandyapadhyayR17}}\\[1ex]
& {$\ptas$~\cite{BandyapadhyayR17}} & (trivial) &\\[1ex]
\hline
\multirow{3}{*}{$H\sqcup V$} & \multirow{2}{*}{5-approx.} &  \multirow{2}{*}{$\np$-hard } & {$\np$-hard} \\ [1ex]
&  &  & 7-approx.\\ [1ex]
& $\ptas$  & {2-approx.} & $(3+\epsilon)$-approx. \\ [1ex]
 \hline
\end{tabular}
\caption{A summary of our results for the $(S,D)$-stabbing problem. Each row corresponds to a set $S$ and each column corresponds to a set $D$. ($^1$ when each horizontal line segment must be intersected by exactly one selected line segment)}
\label{table:1}
\end{table}
\end{center}

In this paper, we introduce a constrained version of the orthogonal segment stabbing problem in which every horizontal line segment intersects a vertical line. More formally, for $S,D\subseteq H\sqcup V$, we define the \emph{$(S,D)$-stabbing problem} to be the problem of stabbing all the line segments in $D$ with the minimum number of line segments in $S$ with the constraint that all the line segments in $H$ intersect a vertical line $L_v$.
In this work, we study different versions of the $(S,D)$-stabbing problem. Considering these versions, we obtain the following results (see Table~\ref{table:1} for a summary of our results).
\begin{itemize}
\item The $(H,V)$-stabbing, $(H,H\sqcup V)$-stabbing, $(H\sqcup V,V)$-stabbing and $(H\sqcup V,H\sqcup V)$-stabbing problems are all $\np$-hard.
\item The $(V,H)$-stabbing problem is $\np$-hard when each horizontal line segment must be stabbed by exactly one line segment.
\item There exists an $O(n^6)$-time 5-approximation algorithm, and a local-search based $\ptas$ for the $(H\sqcup V,H)$-stabbing problem.
\item There exists an $O(n^5)$-time 2-approximation algorithm for the $(H\sqcup V,V)$-stabbing problem.
\item There exists an $O(n^6)$-time 7-approximation algorithm, and an $n^{O(\frac{1}{\epsilon^2})}$-time $(3+\epsilon)$-approximation algorithm (for any $\epsilon>0$) for the $(H\sqcup V,H\sqcup V)$-stabbing problem.
\end{itemize}

The $(S,D)$-stabbing problem is closely related to the minimum dominating set problem on axis-parallel polygonal chains in the plane. Among the previous work, the papers of Bandyapadhyay et al.~\cite{BandyapadhyayM018} and Chakraborty et al.~\cite{abs-1809-09990} are perhaps the most related to ours, as they consider the minimum dominating set problem on the intersection graph of ``L-shapes'' such that each L intersects a vertical line. The former showed that this problem is $\apx$-hard, and an 8-approximation algorithm for the problem was given by the latter. For more related work, see \cite{BandyapadhyayM018,abs-1809-09990} and the references therein.

\paragraph{Organization.} In Section~\ref{sec:prelimins}, we define some notations that we use throughout the paper, and
make a few observations that will be useful later. Then, we discuss our hardness results in Section~\ref{sec:npHardOne}, and the approximation results in Sections~\ref{sec:HUV,H} and~\ref{sec:approx}. 

\section{Preliminaries}
\label{sec:prelimins}
For a point $p$, let $x_p$ and $y_p$ be its $x$- and $y$-coordinates, respectively. For a horizontal line segment $h$ and a vertical line segment $v$ that intersect each other, let $I(h,v)$ denote their intersection point. The $y$-coordinate (resp., $x$-coordinate) of a horizontal (resp., vertical) line segment is defined to be the $y$-coordinate (resp., $x$-coordinate) of its endpoints. For a horizontal line segment $h$ with left endpoint $p$, call the line $y=y_p$, denoted by $L(h)$, the line-extension of $h$. Consider the line-extension $L(h)$. Let $V'$ be any subset of vertical line segments that intersect $L(h)$. For $v,v'\in V'$, $v$ and $v'$ are called consecutive w.r.t. $V'$ if there is no other line segment $v_1\in V'$ such that $I(L(h),v_1)$ lies in the open interval $(I(L(h),v),I(L(h),v'))$. We say a set of line segments $S_1$ \emph{hits} a set of line segments $S_2$ if for any line segment $s\in S_2$, there is a line segment $s'\in S_1$ such that $s'$ stabs $s$; i.e, $s\cap s'\ne \emptyset$.

Our $\ptas$ is based on the local search technique, which was introduced to computational geometry independently by Chan and Har-Peled~\cite{ChanH12}, and Mustafa and Ray~\cite{MustafaR10}. Consider a minimization problem in which the objective is to compute a feasible subset $A$ of a ground set $S$ whose cardinality is minimum over all such feasible subsets of $S$. Moreover, it is assumed that computing some initial feasible solution and determining whether a subset $A\subseteq S$ is a feasible solution can be done in polynomial time. The local search algorithm for a minimization problem is as follows. Fix some parameter $k$, and let $A$ be some initial feasible solution. Now, if there exist $A'\subseteq A$, $M\subseteq S\setminus A$ such that $|A'|\leq k$, $|M|<|A'|$ and $(A\setminus A')\cup M$ is a feasible solution, then we set $A=(A\setminus A')\cup M$. The above is repeated until no such ``local improvement'' is possible and we return $A$ as the final solution.

Let $\mathcal{B}$ and $\mathcal{R}$ be the solutions returned by the algorithm and an optimum solution, respectively. The following theorem establishes the connection between local search technique and obtaining a $\ptas$.
\begin{theorem}[\cite{ChanH12,MustafaR10}]
\label{thm:LSgivesPTAS}
Consider the solutions $\mathcal{B}$ and $\mathcal{R}$ for a minimization problem, and suppose that there exists a planar bipartite graph $H=(\mathcal{B}\cup\mathcal{R}, E)$ that satisfies the local exchange property: for any subset $\mathcal{B}'\subseteq\mathcal{B}$, $(\mathcal{B}\setminus\mathcal{B'})\cup N_H(\mathcal{B'})$ is a feasible solution, where $N_H(\mathcal{B'})$ denotes the set of neighbours of $\mathcal{B'}$ in $H$. Then, the local search algorithm yields a $\ptas$ for the problem.
\end{theorem}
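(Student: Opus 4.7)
The plan is to show that any locally optimal solution $\mathcal{B}$, taken with respect to swaps of size at most $k=k(\epsilon)$, must satisfy $|\mathcal{B}| \leq (1+\epsilon)|\mathcal{R}|$. Combined with the fact that each local improvement strictly decreases $|\mathcal{B}|$ (so the algorithm performs at most $|S|$ iterations, each scanning $O(|S|^{k+1})$ candidate swaps), this yields a polynomial-time $(1+\epsilon)$-approximation, i.e.\ a $\ptas$. The engine behind the approximation bound is Frederickson's $r$-division of a planar graph: for any $r$, one can partition the vertex set of $H$ into regions $V_1, \dots, V_t$ so that each $|V_i| \leq r$, the \emph{interior} vertices of $V_i$ have all neighbours inside $V_i$, and the total number of \emph{boundary} vertices (those shared across regions) is $O((|\mathcal{B}|+|\mathcal{R}|)/\sqrt{r})$.

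Given the planar bipartite graph $H=(\mathcal{B}\cup\mathcal{R},E)$ guaranteed by the hypothesis, I would apply the $r$-division with $r$ chosen as $\Theta(1/\epsilon^{2})$, and set the local-search parameter to $k = r$. For each region $V_i$, let $\mathcal{B}'_i := \mathcal{B} \cap \mathrm{Int}(V_i)$ be the $\mathcal{B}$-vertices interior to $V_i$. Since interior vertices have all their neighbours inside $V_i$, we have $N_H(\mathcal{B}'_i) \subseteq \mathcal{R} \cap V_i$. The hypothesised local exchange property certifies that $(\mathcal{B}\setminus\mathcal{B}'_i) \cup N_H(\mathcal{B}'_i)$ is feasible, and because $|\mathcal{B}'_i| \leq r = k$, local optimality of $\mathcal{B}$ forces
\[
|\mathcal{B}'_i| \;\leq\; |N_H(\mathcal{B}'_i)| \;\leq\; |\mathcal{R}\cap V_i|.
\]

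Summing over $i$ and separating interior from boundary contributions, every interior $\mathcal{R}$-vertex is counted once on the right, while every boundary vertex is counted a bounded number of times. Thus
\[
|\mathcal{B}| - \bigl|\mathcal{B} \cap \partial\bigr| \;\leq\; |\mathcal{R}| + O\!\bigl(|\mathcal{R}\cap \partial|\bigr),
\]
where $\partial$ denotes the union of region boundaries. Using $|\partial| = O((|\mathcal{B}|+|\mathcal{R}|)/\sqrt{r})$ and rearranging gives $|\mathcal{B}| \leq |\mathcal{R}| + O((|\mathcal{B}|+|\mathcal{R}|)/\sqrt{r})$, and plugging in $r = \Theta(1/\epsilon^{2})$ yields $|\mathcal{B}| \leq (1+\epsilon)|\mathcal{R}|$, as required.

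The main obstacle, as in all arguments of this type, is the careful accounting around boundary vertices: one must invoke a version of Frederickson's partition in which the sharing of boundary vertices across regions is bounded by a constant, and absorb that constant into the choice of $r$. The feasibility half of the argument is given to us for free by the hypothesis, which applies to \emph{every} subset $\mathcal{B}' \subseteq \mathcal{B}$; without that assumption, one would have to verify that the exchange of interior pieces alone does not destroy coverage by $\mathcal{B}$-elements that sit on region boundaries. Once the bound $|\mathcal{B}|\le(1+\epsilon)|\mathcal{R}|$ is in hand, the $\ptas$ conclusion is immediate from the polynomial running time of $k$-local search.
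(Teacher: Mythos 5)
Your argument is correct and is essentially the standard proof of this theorem: the paper does not prove it but imports it from \cite{ChanH12,MustafaR10}, and those works establish it exactly as you do, via local optimality of $k$-swaps combined with Frederickson's $r$-division (or a recursive planar-separator decomposition) of the exchange graph, with $r,k=\Theta(1/\epsilon^2)$. The only nuance worth noting is that the boundary accounting should be stated as Frederickson's bound on the \emph{total} number of boundary-vertex copies summed over all regions, $\sum_i |\partial V_i| = O\bigl((|\mathcal{B}|+|\mathcal{R}|)/\sqrt{r}\bigr)$, rather than a per-vertex constant multiplicity; with that phrasing your summation step goes through verbatim.
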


The following simple observation will be useful in the next sections.
\begin{obs}\label{obs:merge}
Suppose $D=D_1\sqcup D_2$. If there is an $\alpha$-approximation algorithm $A_1$ for $(S,D_1)$-stabbing that runs in $f(n)$ time and a $\beta$-approximation algorithm $A_2$ for $(S,D_2)$-stabbing that runs in $g(n)$ time, then there is an $(\alpha+\beta)$-approximation algorithm $A$ for $(S,D)$-stabbing that runs in $f(n)+g(n)$ time.
\end{obs}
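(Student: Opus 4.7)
The plan is to run the two given approximation algorithms independently on their respective subinstances and output the union of their solutions. Concretely, I apply $A_1$ to the instance $(S, D_1)$ to obtain $S_1 \subseteq S$ with $|S_1| \le \alpha \cdot \opt_1$, where $\opt_1$ denotes the optimal value of $(S, D_1)$-stabbing, and I apply $A_2$ to $(S, D_2)$ to obtain $S_2 \subseteq S$ with $|S_2| \le \beta \cdot \opt_2$. The algorithm $A$ returns $S_1 \cup S_2$.

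For feasibility, since $D = D_1 \sqcup D_2$, every $d \in D$ belongs to exactly one of $D_1$ or $D_2$, and is therefore stabbed by a line segment of $S_1$ or $S_2$, respectively; hence $S_1 \cup S_2$ is a feasible solution to $(S, D)$-stabbing.

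For the approximation ratio, let $\opt$ denote the optimum of $(S, D)$-stabbing. The key observation is that any feasible solution to $(S, D)$-stabbing simultaneously stabs all of $D_1$ and all of $D_2$, so it is feasible for both subproblems; in particular, $\opt_1 \le \opt$ and $\opt_2 \le \opt$. Therefore
\[
|S_1 \cup S_2| \le |S_1| + |S_2| \le \alpha \cdot \opt_1 + \beta \cdot \opt_2 \le (\alpha + \beta) \cdot \opt.
\]
The total running time is $f(n) + g(n)$, absorbing the linear-time cost of taking the union. There is no real obstacle here: the observation is essentially a definitional consequence of combining two independent covering subproblems, and the only subtlety to flag is the monotonicity $\opt_i \le \opt$, which follows from the fact that $D_i \subseteq D$.
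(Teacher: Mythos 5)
Your proposal is correct and follows exactly the paper's argument: run $A_1$ and $A_2$ on the two subinstances, return the union, and bound the cost via $\opt_1\le\opt$ and $\opt_2\le\opt$. Your write-up is in fact slightly more explicit about why each optimum of a subproblem is at most $\opt$, but the approach is the same.
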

\begin{proof}
We show the existence of such an algorithm $A$ for the $(S,D)$-stabbing problem. Given an instance of the $(S,D)$-stabbing problem, consider an algorithm $A$ that uses $A_1$ to compute a solution for $(S,D_1)$-stabbing and uses $A_2$ to compute a solution for $(S,D_2)$-stabbing. Then, it returns the union of these two solutions as its solution. Clearly, the running time of $A$ is then $f(n)+g(n)$ as claimed. Moreover, since the size of an optimum solution for each of the $(S,D_1)$-stabbing and $(S,D_2)$-stabbing problems is at most the size of an optimum solution for $(S,D)$-stabbing and $D_1\cap D_2=\emptyset$, we get the desired approximation factor. 
\end{proof}



\section{Hardness Results}
\label{sec:npHardOne}
In this section, we first prove that the $(H\sqcup V,H\sqcup V)$-stabbing problem is $\np$-hard, and then we will show how to use or modify the construction for proving the hardness of other variants claimed in Table~\ref{table:1} except $(V,H)$-stabbing. To prove the $\np$-hardness of $(V,H)$-stabbing, we will show a completely different reduction from the Positive Planar Cycle 1-In-3SAT problem \cite{DBLP:conf/wads/ChaplickFLRVW17}. Note that $\np$-hardness of any of these variants does not directly imply the $\np$-hardness of any other variant. 

\subsection{$(H\sqcup V,H\sqcup V)$-stabbing}
We first give some definitions and then we describe the reduction. Consider an instance $I$ of the 3SAT problem with $n$ variables and $m$ clauses. The instance $I$ is called \emph{monotone} if each clause is monotone; that is, each clause consists of only positive literals (called \emph{positive clauses}) or only negative literals (called \emph{negative clauses}). The 3SAT problem is $\np$-hard even when restricted to monotone instances~\cite{DBLP:books/fm/GareyJ79}.

We can associate a bipartite \emph{variable-clause graph} $G_I=(V,E)$ with $I$, where the vertices in one partition of $G_I$ correspond to the variables in $I$ and the vertices in the other partition of $G_I$ correspond to the clauses of $I$. Each clause vertex is adjacent to the variable vertices it contains. The instance $I$ is called \emph{planar} if $G_I$ is planar; it is known that the Planar 3SAT problem is $\np$-hard~\cite{Lichtenstein82}. Moreover, when $I$ is an instance of the Planar 3SAT problem, Knuth and Raghunathan~\cite{DBLP:journals/siamdm/KnuthR92} showed that $G_I$ can be drawn on a grid such that all variable vertices are on a vertical line $\ell$ and clause vertices are connected from left or right of that line in a comb-shaped form without any edge crossing. Moreover, in such a drawing of $G_I$, each variable vertex is drawn as a point in the plane and each clause vertex is drawn as a vertical line segment that is spanned from its lowest variable to its highest variable. More precisely, if a clause $C$ contains three variables $u,v$ and $w$ such that $v$ is between $u$ and $w$ on $\ell$, then the clause vertex is drawn as a vertical line segment $s: x_s\times [y_u,y_w]$. The clause vertex is then connected to its three variable vertices using horizontal line segments. See Figure~\ref{fig:planar3SAT} for an example. In the Planar Monotone 3SAT problem, for any instance $I$, $G_I$ can be drawn as described above. Moreover, all positive clauses (resp., negative clauses) lie to the left (resp., right) of the vertical line $\ell$. De Berg and Khosravi~\cite{DBLP:journals/ijcga/BergK12} showed that Planar Monotone 3SAT is $\np$-hard.

\begin{figure}[t]
\centering
\includegraphics[width=0.4\textwidth]{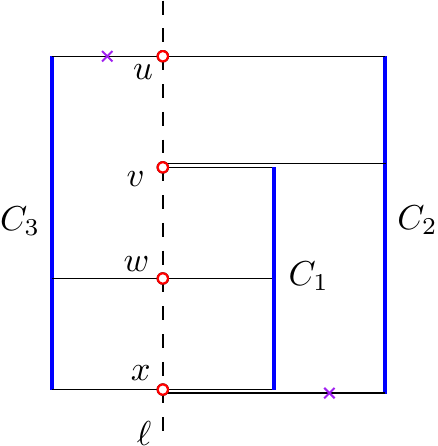}
\caption{An instance of the planar 3SAT in the comb-shaped form of Knuth and Raghunathan~\cite{DBLP:journals/siamdm/KnuthR92}. Crosses on the edges indicate negations; e.g., see $C_2=(u\lor v\lor \overline{x})$.}
\label{fig:planar3SAT}
\end{figure}

\paragraph{Reduction.} We reduce the Planar Monotone 3SAT problem to the $(H\sqcup V,H\sqcup V)$-stabbing problem. For the rest of this section, let $I$ be an instance of the Planar Monotone 3SAT problem with $n$ variables and $m$ clauses. First, we consider a planar monotone drawing of the variable-clause graph $G_I$. 
As mentioned before, this is similar to the non-crossing comb-shaped form of Knuth and Raghunathan~\cite{DBLP:journals/siamdm/KnuthR92}, where (i) variable vertices are all on the vertical line $\ell: x=0$, (ii) the clause vertices are drawn as vertical line segments as described above, and (iii) all the positive clauses (resp., negative clauses) are to the left (resp., right) of $\ell$. Next, we replace each variable vertex $v$ with three horizontal line segments $v_l, v_r$ and $s(v)$. First, $v_l:[x_s,0]\times y_v$ (resp., $v_r:[0,x_{s'}]\times y_v$), where $s$ (resp., $s'$) is the vertical line segment corresponding to the left-most positive (resp., right-most negative) clause that contains $v$. Moreover, $s(v):[-\epsilon,\epsilon]\times y_v$ for some $\epsilon>0$ such that $s(v)$ does not intersect any vertical line segment corresponding to a clause. See Figure~\ref{fig:gadgetOne} for an example. This forms the set of horizontal line segments. Finally, we take the line segments corresponding to clause vertices as the set of vertical line segments. This concludes our instance $I'$ of the $(H\sqcup V,H\sqcup V)$-stabbing problem. Clearly, every horizontal line segment intersects the vertical line $\ell$. Also, there are exactly $3n$ horizontal line segments and $m$ vertical line segments in $I'$, and the instance $I'$ can be constructed in polynomial time.
\begin{lemma}
\label{lem:reductionOne}
The instance $I$ is satisfiable if and only if the instance $I'$ has a feasible solution of size $n$, where $n$ is the number of variables in $I$.
\end{lemma}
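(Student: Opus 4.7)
The plan is to establish a tight correspondence between satisfying assignments of $I$ and feasible solutions of $I'$ of size exactly $n$, with the whole argument resting on one geometric key lemma: for every variable $v$ and every positive clause $C$, the horizontal segment $v_l$ intersects the clause segment $s_C$ if and only if $v\in C$ (and symmetrically $v_r$ meets a negative clause segment $s_C$ iff $v\in C$). The ``if'' direction is immediate from coordinates: when $v\in C$, the choice of $s$ as the leftmost positive clause containing $v$ forces $x_s\le x_C\le 0$, and the $y$-span of $C$ contains $y_v$, so $(x_C,y_v)$ lies on both $v_l$ and $s_C$.

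For the forward direction, given a satisfying assignment $\tau$, I would select $v_l$ into the solution $\mathcal{S}$ when $\tau(v)$ is true and $v_r$ otherwise, immediately yielding $|\mathcal{S}|=n$. Two elementary checks close feasibility. First, $v_l$, $v_r$, and $s(v)$ all contain the point $(0,y_v)$ and horizontal segments of distinct variables sit at distinct $y$-coordinates, so any single member of the triple $\{s(v),v_l,v_r\}$ stabs the whole triple and thus every segment of $H$ is stabbed. Second, the key lemma guarantees that each positive clause $C$ is stabbed by $v_l$ for any true $v\in C$, and symmetrically each negative clause is stabbed by some $v_r$.

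For the backward direction, given a feasible solution $\mathcal{S}$ with $|\mathcal{S}|=n$, I would first observe that $s(v)$ is disjoint from every clause segment and from every horizontal segment of any other variable, so $\mathcal{S}$ must meet each of the $n$ pairwise disjoint triples $\{s(v),v_l,v_r\}$. The budget of $n$ is therefore saturated, forcing $\mathcal{S}$ to contain exactly one segment per triple and no clause segment at all. Replacing each $s(v)\in\mathcal{S}$ by $v_l$ preserves feasibility (since $v_l$ stabs $s(v)$, $v_r$, and only adds intersections with clause segments), so I may assume $\mathcal{S}\subseteq\bigcup_v\{v_l,v_r\}$ and declare $v$ true iff $v_l\in\mathcal{S}$. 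For each positive clause $C$, the stabber of $s_C$ in $\mathcal{S}$ cannot be any $v'_r$ (on the wrong side of $\ell$) or any $s(v')$ (disjoint from $s_C$), so it must be some $v'_l$, and the key lemma then forces $v'\in C$, whence $C$ is satisfied; the negative clause case is symmetric.

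The main obstacle is the ``only if'' direction of the key lemma, which I would prove by contradiction from planarity of the Knuth--Raghunathan comb drawing of $G_I$. If $v\notin C$ and $v_l$ met $s_C$, the intersection point $(x_C,y_v)$ would be an interior point of $s_C$ pierced by the tooth from $v$ to its leftmost positive clause (which runs as a subsegment of $v_l$ in the embedding); because $v\notin C$, this point is neither a vertex of $G_I$ nor an endpoint of any variable--clause edge, so it is an improper edge crossing, contradicting planarity of the drawing. Everything else reduces to routine bookkeeping around the three-segment variable gadget.
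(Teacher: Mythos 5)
Your proof is correct and follows essentially the same route as the paper's: pick $v_l$ or $v_r$ according to the truth value for one direction, and for the converse use the pairwise disjoint triples $\{s(v),v_l,v_r\}$ to force exactly one horizontal segment per variable (after the w.l.o.g.\ replacement of $s(v)$ by $v_l$) and no clause segments, then read off the assignment. The only difference is that you state and prove explicitly, via planarity of the Knuth--Raghunathan comb drawing, that $v_l$ meets a positive clause segment only when $v$ occurs in that clause --- a geometric fact the paper's proof uses implicitly.
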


\begin{proof}
First, suppose that there exists an assignment that satisfies all clauses in $I$. We construct a feasible solution $S$ for $I'$ as follows. For each variable $v$, if $v$ is set to true (resp., false), then we add the line segment $v_l$ (resp., $v_r$) to $S$. Clearly, $|S|=n$ and every horizontal line segment is stabbed by some line segment in $S$. Now, take any vertical line segment $s\in I'$. If $s$ corresponds to a positive (resp., negative) clause $C$, then at least one of the variables $v$ in $C$ is set to true (resp., false) and so we have added $v_l$ (resp., $v_r$) to $S$. So, every line segment in $I'$ is stabbed by some line segment in $S$.

Now, suppose that there exists a feasible solution $S$ for $I'$ such that $|S|=n$. We assume w.l.o.g. that $s(v)\notin S$ for all variables $v$ of $I$; this is because we can always replace such a line segment $s(v)$ with $v_l$, and still have a feasible solution for $I'$ with the same size $n$. Since $|S|=n$, we must have $v_l\in S$ or $v_r\in S$ for all variables $v$ of $I$, because if there is a variable $v$ for which $v_l\notin S$ and $v_r\notin S$, then no line segment in $S$ can dominate $s(v)$ --- a contradiction to feasibility of $S$. This implies that exactly one of $v_l$ and $v_r$ is in $S$ for all variables $v$ of $I$, and so no vertical line segment can be in $S$, as $|S|=n$. We now obtain a true assignment for $I$ as follows. For each variable $v$, we set $v$ to true (resp., false) if $v_l\in S$ (resp., $v_r\in S$). To see why this is a true assignment for $I$, take any clause $C$ and let $s\in S$ be a line segment that dominates the vertical line segment corresponding to $C$. If $C$ is a positive (resp., negative) clause, then we have set the variable corresponding to $s$ to true (resp., false) and so $C$ is satisfied.
\end{proof}

\begin{figure}[t]
\centering
\includegraphics[width=0.45\textwidth]{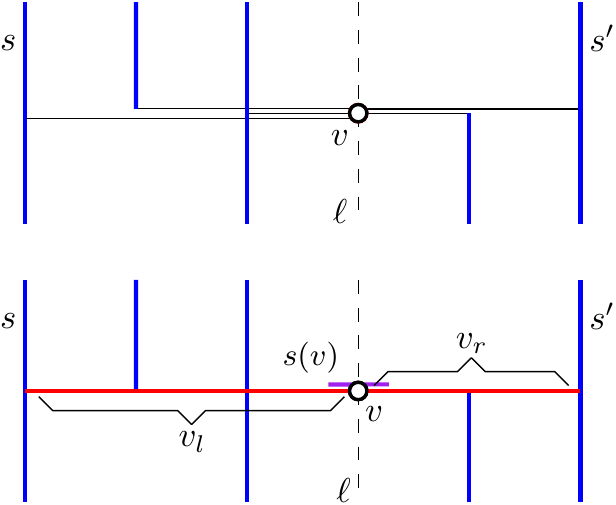}
\caption{A variable $v$ with all the clauses that contain a literal of $v$ (top). The three horizontal line segments $v_l, v_r$ and $s(v)$ corresponding to $v$ (bottom).}
\label{fig:gadgetOne}
\end{figure}

By Lemma~\ref{lem:reductionOne}, we have the following theorem.
\begin{theorem}
\label{thm:firstHardness}
The $(H\sqcup V,H\sqcup V)$-stabbing problem is $\np$-hard.
\end{theorem}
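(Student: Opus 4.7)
The plan is to assemble a polynomial-time many-one reduction from Planar Monotone 3SAT, whose $\np$-hardness was established by De Berg and Khosravi~\cite{DBLP:journals/ijcga/BergK12}, to the decision version of $(H\sqcup V,H\sqcup V)$-stabbing; essentially all the machinery has already been built up, so the theorem is a short wrap-up.

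First, I would observe that for an instance $I$ of Planar Monotone 3SAT with $n$ variables and $m$ clauses, the instance $I'$ constructed above is produced in polynomial time: one begins with the comb-shaped planar drawing of $G_I$ due to Knuth and Raghunathan~\cite{DBLP:journals/siamdm/KnuthR92} (which can be computed in polynomial time), then creates $3n$ horizontal segments $v_l, v_r, s(v)$ per variable and $m$ vertical segments for the clauses, all with polynomially bounded coordinates. By construction each of the $3n$ horizontal segments meets the common vertical line $\ell:x=0$, so $I'$ is a legitimate $(H\sqcup V,H\sqcup V)$-stabbing instance.

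Second, I would invoke Lemma~\ref{lem:reductionOne}, which states that $I$ is satisfiable if and only if $I'$ admits a feasible solution of size $n$. To turn this into a clean decision-to-decision reduction, I would note the trivial lower bound $\opt(I')\ge n$: the $n$ pairwise disjoint ``short'' segments $s(v)$ must each be stabbed, and by the choice of $\epsilon$ in the construction, the only segments of $I'$ that meet $s(v)$ are $v_l$, $v_r$, and $s(v)$ itself, which are disjoint across different variables. Combined with Lemma~\ref{lem:reductionOne}, this yields $I\text{ satisfiable}\iff \opt(I')\le n$. Hence a polynomial-time algorithm for the optimization (or indeed the natural decision) version of $(H\sqcup V,H\sqcup V)$-stabbing would decide Planar Monotone 3SAT in polynomial time, giving the theorem.

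There is no real technical obstacle here: the heavy lifting resides in the geometric construction preceding the lemma and in Lemma~\ref{lem:reductionOne} itself. The only point that needs a sentence of care is the lower bound $\opt(I')\ge n$ that upgrades the lemma's equality-of-sizes equivalence into an ``at most $n$'' equivalence suitable for a standard $\np$-hardness argument.
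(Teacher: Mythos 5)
Your proposal is correct and follows essentially the same route as the paper, which derives the theorem directly from the polynomial-time construction of $I'$ and Lemma~\ref{lem:reductionOne}. Your added observation that $\opt(I')\ge n$ (since the pairwise ``independent'' segments $s(v)$ force $n$ distinct stabbers) is a small but legitimate bookkeeping step that the paper leaves implicit when passing from ``solution of size $n$'' to the decision threshold ``size at most $n$''.
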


\subsection{Other $\np$-hardness Results}
Consider the construction in Section~\ref{sec:npHardOne}. Observe that the set $S$ in the proof of Lemma~\ref{lem:reductionOne} consists of only horizontal line segments; that is, the problem is $\np$-hard even if we are restricted to selecting a minimum number of horizontal line segments only to stab all line segments.
\begin{theorem}
The $(H,H\sqcup V)$-stabbing problem is $\np$-hard.
\end{theorem}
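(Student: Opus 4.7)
The plan is to reuse, essentially verbatim, the reduction from Planar Monotone 3SAT constructed in Section~\ref{sec:npHardOne}. Given an instance $I$ of Planar Monotone 3SAT, I would build exactly the same instance $I'$ of the stabbing problem: the horizontal line segments $v_l$, $v_r$, $s(v)$ for each variable $v$, and the vertical line segments corresponding to clauses, all arranged as in Figure~\ref{fig:gadgetOne}. Since the instance $I'$ is literally identical, the only thing that changes between Theorem~\ref{thm:firstHardness} and the present statement is the allowed stabbing set: now we force $S \subseteq H$ while still requiring every segment in $H \sqcup V$ to be stabbed.

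I would then argue equivalence analogously to Lemma~\ref{lem:reductionOne}. For the forward direction, the key observation is that the feasible set of size $n$ constructed from a satisfying assignment of $I$ already consists only of horizontal segments (one of $v_l$ or $v_r$ per variable), so it is a valid solution for the more restricted $(H, H \sqcup V)$-stabbing problem as well. For the reverse direction, any feasible solution $S$ for the $(H, H \sqcup V)$-stabbing instance is automatically a feasible solution for the $(H \sqcup V, H \sqcup V)$-stabbing instance of the same size (since $H \subseteq H \sqcup V$), so if $|S| = n$ then by Lemma~\ref{lem:reductionOne} the original instance $I$ is satisfiable. This gives the desired if-and-only-if correspondence.

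There is essentially no technical obstacle: the construction already has the property we need, and the restriction to horizontal stabbers is consistent with the structure of the optimum. The only thing worth double-checking is that the substitution step in the proof of Lemma~\ref{lem:reductionOne} (replacing an $s(v)$ in the solution by $v_l$) still works under the restriction $S \subseteq H$, which it does because $v_l$ is itself horizontal. So the write-up would simply note that the solution produced in the forward direction of Lemma~\ref{lem:reductionOne} lies in $H$, and that a feasible solution of size $n$ for $(H, H \sqcup V)$-stabbing is a fortiori a feasible solution of size $n$ for $(H \sqcup V, H \sqcup V)$-stabbing, and then invoke Lemma~\ref{lem:reductionOne} to conclude.
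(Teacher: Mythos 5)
Your proposal is correct and matches the paper's argument: the paper likewise observes that the size-$n$ solution built in the forward direction of Lemma~\ref{lem:reductionOne} consists only of horizontal segments, while any feasible solution to the restricted problem is a fortiori feasible for $(H\sqcup V,H\sqcup V)$-stabbing, so the same reduction applies verbatim. No changes needed.
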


The construction can be modified to show the hardness of the $(H,V)$-stabbing and $(H\sqcup V,V)$-stabbing problems. To this end, for every variable vertex $v$, we remove the horizontal line segment $s(v)$ and instead add a small vertical line segment $s'(v)$ such that it intersects $v_l$ and $v_r$ only. (To this end, we can e.g. extend $v_r$ slightly to the left of $\ell$ and place $s'(v)$ to the left of $\ell$ and very close to it.) Considering the resulting set of line segments to be an instance of the $(H,V)$-stabbing problem, one can prove a result similar to Lemma~\ref{lem:reductionOne}: the first direction is true as every vertical line segment is stabbed by some horizontal line segment. Moreover, we need to stab the small vertical line segment $s'(v)$, for every variable vertex $v$, which leads to a truth assignment for the instance $I$.

We notice that this also shows the $\np$-hardness of the $(H\sqcup V,V)$-stabbing problem. This is because the vertical line segments are pairwise disjoint; hence, given a feasible solution for the $(H\sqcup V,V)$-stabbing problem, we can obtain a feasible solution of the same size by replacing each vertical line segment by a horizontal one. Hence, we have the following theorem.
\begin{theorem}
\label{thm:HVstabbing}
The $(H,V)$-stabbing problem is $\np$-hard. Moreover, the $(H\sqcup V,V)$-stabbing problem is also $\np$-hard.
\end{theorem}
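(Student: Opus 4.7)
The plan is to recycle the reduction from Planar Monotone 3SAT built in \autoref{sec:npHardOne}, following the hint already given: for every variable $v$, delete the auxiliary horizontal segment $s(v)$ and replace it by a very short vertical segment $s'(v)$ positioned so that it is intersected by exactly the two horizontal segments $v_l$ and $v_r$ and by nothing else. Concretely, I would first extend $v_r$ by a tiny amount $\epsilon>0$ to the left of the line $\ell:x=0$ (so that $v_l$ and $v_r$ both contain the horizontal sub-segment $[-\epsilon,0]\times\{y_v\}$), and then take $s'(v):\{-\epsilon/2\}\times[y_v-\delta,y_v+\delta]$ for some $\delta>0$ small enough that $s'(v)$ avoids every other horizontal segment and every clause segment. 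Since all clause segments sit strictly to the left or strictly to the right of $\ell$ at non-infinitesimal $x$-coordinates, and since the $y$-coordinates of distinct variables differ by a non-infinitesimal amount, such $\epsilon$ and $\delta$ always exist. Let $I'$ denote the resulting instance, which has $2n$ horizontal segments ($v_l,v_r$ for each $v$) and $m+n$ vertical segments ($m$ clause segments plus the $s'(v)$'s), all constructible in polynomial time; moreover every horizontal segment still crosses $\ell$.

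Next, I would prove the analog of \autoref{lem:reductionOne}: $I$ is satisfiable if and only if $I'$ admits a feasible $(H,V)$-stabbing solution of size $n$. The forward direction is immediate — pick $v_l$ if $v$ is assigned true and $v_r$ otherwise; each $s'(v)$ is stabbed by construction, and each clause segment is stabbed by (at least) the horizontal segment associated with the literal that satisfies it. For the reverse direction, the crucial observation is that $s'(v)$ can only be stabbed (using horizontal segments, which is all we are allowed) by $v_l$ or $v_r$. Hence a feasible solution of size $n$ must contain exactly one of $\{v_l,v_r\}$ for every variable $v$ and no other horizontal segments. The choice $v_l$ (resp.\ $v_r$) naturally yields the assignment $v=\text{true}$ (resp.\ $v=\text{false}$), and feasibility of the clause segments forces every clause to be satisfied, exactly as in the proof of \autoref{lem:reductionOne}. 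This establishes $\np$-hardness of $(H,V)$-stabbing.

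For the $(H\sqcup V,V)$-stabbing claim, I would use the same instance $I'$ and observe that the vertical segments in $D$ are pairwise disjoint: the clause segments are disjoint by construction of the monotone planar drawing, the segments $s'(v)$ for distinct $v$ have disjoint $x$-coordinates, and by the choice of $\epsilon,\delta$ no $s'(v)$ meets any clause segment. Consequently a vertical segment can never stab another vertical segment, so in any feasible $(H\sqcup V,V)$-stabbing solution every chosen vertical segment is useless and can be discarded without losing feasibility. Thus an optimal $(H\sqcup V,V)$-stabbing solution on $I'$ is also an optimal $(H,V)$-stabbing solution on $I'$, and $\np$-hardness transfers immediately.

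The main obstacle is really only the geometric bookkeeping in the first paragraph — ensuring that the short vertical gadgets $s'(v)$ interact with $v_l,v_r$ and with nothing else — but this is handled by a one-line general-position argument that picks $\epsilon$ and $\delta$ smaller than every relevant gap in the (finite) construction. Once that is in place, the logical structure of the reduction is identical to that of \autoref{lem:reductionOne}, and the $(H\sqcup V,V)$ hardness is a free corollary via the pairwise-disjointness of the vertical segments.
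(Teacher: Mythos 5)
Your proposal follows essentially the same route as the paper: the same gadget modification (drop $s(v)$, extend $v_r$ slightly past $\ell$, insert a short vertical $s'(v)$ meeting only $v_l$ and $v_r$), the same Lemma~\ref{lem:reductionOne}-style equivalence, and the same pairwise-disjointness observation on the vertical segments to transfer hardness to $(H\sqcup V,V)$-stabbing (the paper replaces useless vertical segments by horizontal ones rather than discarding them, an immaterial difference). Only a cosmetic remark: with a common $\epsilon$ the segments $s'(v)$ share an $x$-coordinate, so their pairwise disjointness should be justified by their disjoint $y$-ranges rather than by ``disjoint $x$-coordinates''; this does not affect the argument.
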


\subsection{$(V,H)$-stabbing}
Now, we show that the $(V,H)$-stabbing problem is $\np$-hard when each horizontal line segment must be stabbed by exactly one line segment. We show a reduction from the following variant of the 3SAT problem, which was shown to be $\np$-hard by Chaplick et al.~\cite{DBLP:conf/wads/ChaplickFLRVW17}. In an instance $I$ of the Positive Planar Cycle 1-In-3SAT problem, we have $n$ variables and $m$ clauses together with an embedding of $G_I+C$, where $C$ is a cycle through all clause vertices. Moreover, each clause contains exactly three variables and all literals are positive. The problem is to decide whether $I$ is \emph{satisfiable}; here, being satisfiable means whether there exists an assignment of the variables such that exactly one variable in each clause is true. Notice that the problem remains $\np$-hard if we are also given an integer $1\leq k\leq n$ and the problem is to decide if there exists an assignment of the variables such that at most $k$ variables are set to true and exactly one variable in each clause is true.

\paragraph{Reduction.} Given an instance of the Positive Planar Cycle 1-In-3SAT problem, consider the embedding of $G_I+C$. We first transform the embedding into another one in which the cycle $C$ is a half-circle and all the clauses are positioned on the vertical diameter of this half-circle. Let $\ell$ be the vertical line through the diameter. Consider the clause vertices on $\ell$ from top to bottom. Next, we transform the embedding into a ``comb-shaped'' form, where clause vertices are drawn as horizontal line segments and variable vertices are drawn as vertical line segments. We do this transformation in such a way that the vertical line segment corresponding to a vertex $v$ passes through $v$ and spans from the top-most clause to the bottom-most clause that contain it. Moreover, the horizontal line segment corresponding to a clause $C$ passes through $C$ and spans from the left (resp., right) to the left-most (resp., right-most) variable vertex that it contains. This ensures that the horizontal line segment corresponding to a clause $C$ intersects exactly those vertical line segments that correspond to variables that $C$ contain. Notice that this is similar to the ``comb-shaped'' form of Knuth and Raghunathan~\cite{DBLP:journals/siamdm/KnuthR92} except the position of variable and clause vertices are swapped. The set of all vertical and horizontal line segments forms our instance $I'$ of the $(V,H)$-stabbing problem. Observe that $I'$ consists of $n$ vertical and $m$ horizontal line segments, and that it can be constructed in polynomial time.
\begin{lemma}
\label{lem:reductionTwo}
The instance $I$ is satisfiable with $k$ variables set to true if and only if there exists a feasible solution of size $k$ for $I'$.
\end{lemma}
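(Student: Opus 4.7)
The plan is to exploit the one-to-one correspondence built into the construction: each variable of $I$ is represented by exactly one vertical line segment of $I'$, each clause of $I$ is represented by exactly one horizontal line segment of $I'$, and by construction the horizontal segment for a clause $C$ crosses precisely those vertical segments that correspond to the variables appearing in $C$. Under this dictionary, selecting a vertical segment into the stabbing set will play the role of setting its variable to true, and the \emph{exactly one} stabbing constraint will play the role of the 1-in-3 constraint on clauses.

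For the forward direction, I would assume $I$ admits a satisfying assignment in the 1-in-3 sense with exactly $k$ true variables, and let $S$ be the set of vertical segments in $I'$ whose variables are true. Then $|S|=k$. Taking any horizontal segment $h_C\in I'$ associated with a clause $C$, the vertical segments crossing $h_C$ are exactly those of the three variables in $C$, and by the 1-in-3 property exactly one of those variables is true, so $h_C$ is stabbed by exactly one segment of $S$. Thus $S$ is a feasible solution of size $k$ for $I'$ under the ``exactly one stab'' requirement.

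For the converse, I would start with a feasible $S\subseteq V$ with $|S|=k$ satisfying the exactness condition, and set a variable to true iff its vertical segment lies in $S$; this gives $k$ true variables. To verify the 1-in-3 property, take any clause $C$ of $I$ and consider its horizontal segment $h_C$: the vertical segments crossing $h_C$ are precisely the variables of $C$, so the requirement that $h_C$ is stabbed by exactly one member of $S$ translates directly into the statement that exactly one variable of $C$ is set to true. Hence the assignment is 1-in-3 satisfying.

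The argument is essentially bookkeeping; the only thing to be careful about is the geometric claim that the horizontal segment for $C$ crosses a vertical segment for variable $v$ if and only if $v$ appears in $C$. This follows from the description of the comb-shaped drawing: the vertical segment for $v$ spans between the topmost and bottommost clause containing $v$, while the horizontal segment for $C$ spans between its leftmost and rightmost variable. Standard planarity and spanning arguments on this comb layout (mirroring the Knuth--Raghunathan argument, with the roles of variables and clauses swapped) guarantee that no spurious crossings are introduced, which is the mild obstacle but is settled already by the construction preceding the lemma.
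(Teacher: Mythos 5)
Your proposal is correct and follows essentially the same route as the paper's proof: select the vertical segments of the true variables for the forward direction, and read a 1-in-3 assignment off a feasible set of vertical stabbers for the converse, relying on the construction's guarantee that a clause's horizontal segment meets exactly the vertical segments of its variables. No meaningful differences from the paper's argument.
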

\begin{proof}
First, suppose that $I$ is satisfiable with $k$ variables set to true. Then, we select the $k$ vertical line segments corresponding to these variables as the solution for $I'$. Each horizontal line segment $s$ of $I'$ is stabbed by exactly one selected line segment: the one that satisfies the clause corresponding to $s$. Now, suppose that $I'$ has a feasible solution $S$ of size $k$. Notice that each line segment of $S$ is vertical and so we set to true exactly those variables that correspond to the line segments in $S$. Clearly, $k$ variables are set to true. Moreover, since $S$ is a feasible solution for $I'$, each horizontal line segment is stabbed by exactly one vertical line segment; that is, exactly one variable per clause is set to true.
\end{proof}

By Lemma~\ref{lem:reductionTwo}, we have the following theorem.
\begin{theorem}
The $(V,H)$-stabbing problem is $\np$-hard when every horizontal line segment must be stabbed by exactly one line segment.
\end{theorem}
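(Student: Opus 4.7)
The theorem is essentially a direct consequence of Lemma~\ref{lem:reductionTwo} combined with the hardness of the budgeted variant of Positive Planar Cycle 1-In-3SAT, so the plan is simply to glue these two facts together carefully and make sure the ``exactly one'' constraint is respected on both sides of the reduction.

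First, I would recall from the discussion just before the reduction that the Positive Planar Cycle 1-In-3SAT problem remains $\np$-hard even when we are additionally given a budget $1 \le k \le n$ and the question is whether there is a satisfying assignment (i.e., exactly one true literal per clause) in which at most $k$ variables are set to true. I would use this budgeted version as the source problem.

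Next, given such an instance $(I,k)$, I would run the polynomial-time construction from the preceding paragraph to obtain the instance $I'$ of the $(V,H)$-stabbing problem, interpreted under the ``every horizontal segment is stabbed by exactly one selected segment'' restriction. The decision question for $I'$ is: does there exist a feasible solution of size at most $k$? By Lemma~\ref{lem:reductionTwo}, $I$ has a 1-in-3 satisfying assignment using at most $k$ true variables if and only if $I'$ admits a feasible solution of size at most $k$; note that in the ``if'' direction of Lemma~\ref{lem:reductionTwo} the feasibility already enforces that each horizontal segment is stabbed by exactly one vertical segment, which matches the constrained version of $(V,H)$-stabbing we are reducing to. Since the construction is polynomial-time and preserves the budget, this is a polynomial-time many-one reduction from an $\np$-hard problem.

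The only subtlety I would double-check is the ``exactly one'' condition: in the forward direction of Lemma~\ref{lem:reductionTwo} we pick exactly the $k$ vertical segments corresponding to the true variables, and the 1-in-3 property of the assignment guarantees that each horizontal (clause) segment is stabbed by precisely one of them. In the backward direction, the ``exactly one stab'' constraint is exactly what Lemma~\ref{lem:reductionTwo} uses to recover a 1-in-3 assignment. So there is no real obstacle beyond stating the reduction chain; I do not expect any new technical work, and the theorem follows.
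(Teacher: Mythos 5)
Your proposal matches the paper's own argument: the theorem is stated as an immediate consequence of Lemma~\ref{lem:reductionTwo} together with the observation that the budgeted version of Positive Planar Cycle 1-In-3SAT (at most $k$ true variables) remains $\np$-hard, which is exactly the chain you describe. Your extra check that the ``exactly one'' constraint is preserved in both directions is consistent with how the lemma is used, so nothing further is needed.
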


\section{$(H\sqcup V,H)$-Stabbing}
\label{sec:HUV,H}
In this section, we first design a 5-approximation for $(H\sqcup V,H)$-stabbing that runs in $O(n^6)$ time, and then give a $\ptas$ based on the local search technique. 
\subsection{A 5-approximation for $(H\sqcup V,H)$-stabbing}
Our algorithm is based on a reduction to three simpler problems via LP. The approach is very similar to the one used in \cite{abs-1809-09990}. Recall that $L_v$ is the common vertical line that the horizontal segments intersect. Let $V(l)$ and $V(r)$ be the vertical line segments that lie on the left and right of $L_v$, respectively. For simplicity, suppose $V=V(l)\sqcup V(r)$. Fix an optimum solution $\opt$. Let $H_1$ (resp. $H_2$) be the set of horizontal line segments that get hit by $V(l)$ (resp. $V(r)$) in $\opt$. Note that the line segments in $H_3=H\setminus (H_1\cup H_2)$ get hit by horizontal line segments in $\opt$. If we knew the three sets $H_1,H_2$ and $H_3$, we could solve three subproblems $(V(l),H_1)$-stabbing, $(V(r),H_2\setminus H_1)$-stabbing and $(H,H_3)$-stabbing, and return the union of these three solutions. It is not hard to see that the returned solution has size $|\opt|$. We do not know those three sets, but one can guess (modulo constant approximation) these sets from a fractional LP solution to $(H\sqcup V,H)$-stabbing.

For an LP or ILP $M$, let $\opt(M)$ be its optimum cost. First, we consider the ILP (denoted by ILP1) of $(H\sqcup V,H)$-stabbing. For each line segment $j\in H\sqcup V$, we take a standard 0-1 variable $y_j$ denoting whether $j$ is chosen in the solution or not. For $i\in H$, Let $V(l)^i$ (resp. $V(r)^i$ and $H^i$) be the line segments in $V(l)$ (resp. $V(r)$ and $H$) that stab $i$. Following is the LP relaxation of the ILP.

\begin{mdframed}[backgroundcolor=gray!9]
\begin{alignat*}{3}
& \text{minimize}  && \sum_{j\in H\sqcup V} y_j && \text{(LP1)}\\
& \text{subject to } && \sum_{j\in V(l)^i} y_j+\sum_{j\in V(r)^i} y_j &&\\
& \quad && +\sum_{j\in H^i} y_j \ge 1 \quad  && \forall i\in H\\
& \quad && y_j \in [0,1]  && j \in H\sqcup V
\end{alignat*}
\end{mdframed}

We first solve this LP to obtain a fractional optimal solution $y^*=\{y_j^*:j\in H\sqcup V\}$. Let
\[
H_l=\{i\in H:\sum_{j\in V(l)^i} y_j^*\ge \frac{2}{5}\},
\]
\[
H_r=\{i\in H:\sum_{j\in V(r)^i} y_j^*\ge \frac{2}{5}\},
\]
and
\[
H_h=\{i\in H:\sum_{j\in H^i} y_j^*\ge \frac{1}{5}\}.
\]
Now, consider the following ILPs.

\begin{mdframed}[backgroundcolor=gray!9]
\begin{alignat*}{3}
& \text{minimize}  && \sum_{j\in V(l)} y_j && \text{(ILP2)}\\
& \text{subject to } && \sum_{j\in V(l)} y_j\ge 1 \quad  && \forall i\in H_l\\
& \quad && y_j \in \{0,1\}  && j \in V(l)
\end{alignat*}
\end{mdframed}

\begin{mdframed}[backgroundcolor=gray!9]
\begin{alignat*}{3}
& \text{minimize}  && \sum_{j\in V(r)} y_j && \text{(ILP3)}\\
& \text{subject to } && \sum_{j\in V(r)} y_j\ge 1 \quad  && \forall i\in H_r\\
& \quad && y_j \in \{0,1\}  && j \in V(r)
\end{alignat*}
\end{mdframed}

\begin{mdframed}[backgroundcolor=gray!9]
\begin{alignat*}{3}
& \text{minimize}  && \sum_{j\in H} y_j && \text{(ILP4)}\\
& \text{subject to } && \sum_{j\in H} y_j\ge 1 \quad  && \forall i\in H_h\\
& \quad && y_j \in \{0,1\}  && j \in H
\end{alignat*}
\end{mdframed}

Note that the problems corresponding to ILP2 and ILP3 are precisely the problem of stabbing horizontal rays using vertical line segments \cite{KatzMN05}. Also, ILP4 is corresponding to the stabbing problem of horizontal line segments using horizontal line segments where all line segments intersect $L_v$. By definition of $H_l, H_r$ and $H_h$, there must be a feasible solution for each of these three ILPs. We use the algorithm in \cite{KatzMN05} to obtain optimum solutions $S_1$ and $S_2$ for ILP2 and ILP3, respectively. Also, an optimum solution $S_3$ of ILP4 can be obtained using a simple greedy selection scheme. Finally, we return the solution $S_1\cup S_2\cup S_3$.

It is not hard to see that $S_1\cup S_2\cup S_3$ hits $H$. Next, we argue that $|S_1\cup S_2\cup S_3|\le 5\cdot \opt$. Let $LP2$ (resp. $LP3$ and $LP4$) be the LP relaxation of $ILP2$ (resp. $ILP3$ and $ILP4$). Consider the optimum solution $y^*=\{y_j^*:j\in H\sqcup V\}$ of LP1. First, we have the following simple observation. 

\begin{obs}\label{obs:lp3bound}
$\opt(LP2)$ $\le 2.5\cdot \sum_{j\in V(l)} y_j^*$, $\opt(LP3)$ $\le 2.5\cdot \sum_{j\in V(r)} y_j^*$, and $\opt(LP4)$ $\le 5\cdot \sum_{j\in H} y_j^*$. 
\end{obs}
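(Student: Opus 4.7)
The plan is to exhibit, for each of LP2, LP3, LP4, an explicit feasible fractional solution obtained by scaling the restriction of $y^*$ to the appropriate ground set, and then read off the claimed bounds from the objective value of that scaled solution. This is the standard ``filtering'' trick: the thresholds $2/5$, $2/5$, $1/5$ used to define $H_l$, $H_r$, $H_h$ were chosen precisely so that their reciprocals $5/2$, $5/2$, $5$ are the scaling factors that restore feasibility.

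Concretely, for the first bound, I would define $\tilde{y}_j = (5/2)\, y_j^*$ for every $j \in V(l)$. For any $i \in H_l$, the definition of $H_l$ gives $\sum_{j \in V(l)^i} y_j^* \ge 2/5$, and hence $\sum_{j \in V(l)^i} \tilde{y}_j \ge 1$, so $\tilde y$ satisfies the covering constraint of LP2 (reading the constraint of ILP2/LP2 with the sum ranging over $V(l)^i$, which is evidently the intended statement). Non-negativity and the upper bound $y_j \le 1$ can be enforced by capping at $1$, which only decreases the objective; thus $\tilde{y}$ is feasible for LP2 and
\[
\opt(LP2) \;\le\; \sum_{j \in V(l)} \tilde{y}_j \;=\; \frac{5}{2}\sum_{j \in V(l)} y_j^*.
\]
The argument for LP3 is identical with $V(l)$ replaced by $V(r)$ and $H_l$ replaced by $H_r$, again using the threshold $2/5$. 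For LP4, I would scale $y^*$ restricted to $H$ by the factor $5$: for any $i \in H_h$, $\sum_{j \in H^i} 5 y_j^* \ge 5 \cdot (1/5) = 1$, so the scaled vector is feasible for LP4 and yields $\opt(LP4) \le 5\sum_{j \in H} y_j^*$.

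There is essentially no obstacle; the only thing to be slightly careful about is the mild abuse of notation in the statements of ILP2--ILP4, where each covering constraint is indexed by $i$ but the sum omits the explicit $V(l)^i$, $V(r)^i$, $H^i$ restriction. I would read these constraints as the natural set-cover constraints (the LP relaxations of stabbing $H_l$, $H_r$, $H_h$ by the respective segment families), which is consistent with the subsequent appeal to the algorithm of Katz et al.\ for ILP2, ILP3 and to greedy for ILP4. Once this reading is fixed, the observation reduces to three one-line scaling arguments, and the proof is complete.
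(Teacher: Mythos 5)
Your scaling argument is correct and is exactly the intended justification: the paper omits a proof, calling the observation simple, and the thresholds $2/5$, $2/5$, $1/5$ in the definitions of $H_l$, $H_r$, $H_h$ are chosen precisely so that multiplying the restriction of $y^*$ by $5/2$, $5/2$, $5$ (capped at $1$) gives feasible fractional solutions for LP2, LP3, LP4. Your reading of the covering constraints in ILP2--ILP4 as summing over $V(l)^i$, $V(r)^i$, $H^i$ is also the intended one.
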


From the work of \cite{abs-1809-09990}, it follows that the integrality gap of the stabbing problem of horizontal rays using vertical line segments is 2. In particular, their result can be summarized as follows.

\begin{lemma}\label{lem:rayig}
\cite{abs-1809-09990} $\opt(ILP2)\le 2\cdot \opt(LP2)$ and $\opt(ILP3)\le 2\cdot \opt(LP3)$. 
\end{lemma}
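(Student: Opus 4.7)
The plan is to prove $\opt(\text{ILP2}) \le 2 \cdot \opt(\text{LP2})$; the bound for $\text{ILP3}$ follows by reflecting the instance through $L_v$. The first step is to reformulate $\text{ILP2}$ as the problem handled by~\cite{abs-1809-09990}. Since every $v \in V(l)$ lies entirely to the left of $L_v$ and every $h \in H_l$ crosses $L_v$, the segments of $V(l)$ that stab $h$ depend only on the portion $h' = [x_h^L, L_v] \times \{y_h\}$ of $h$ to the left of $L_v$. Hence $\text{ILP2}$ is exactly the problem of stabbing a family of horizontal leftward half-rays (each anchored at $L_v$) by a minimum-cardinality subset of $V(l)$, which is the problem whose LP integrality gap is bounded in~\cite{abs-1809-09990}.

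To see where the factor of $2$ comes from, we would identify each half-ray $h'$ with the point $p_h = (x_h^L, y_h)$, and each $v \in V(l)$ at $x$-coordinate $x_v$ with $y$-extent $[y_v^B, y_v^T]$ with the leftward-unbounded rectangle $R_v = (-\infty, x_v] \times [y_v^B, y_v^T]$. Then $v$ stabs $h$ iff $p_h \in R_v$. The key geometric feature we would exploit is that if $R_v$ contains $p_h$, then every $R_{v'}$ with $x_{v'} \ge x_v$ whose $y$-range also contains $y_h$ also contains $p_h$; this monotone ``staircase'' structure distinguishes our family from arbitrary axis-parallel rectangles (for which no constant integrality gap is achievable).

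Given an optimal fractional solution $y^*$ of $\text{LP2}$, we would apply the rounding procedure of~\cite{abs-1809-09990}: iteratively, for each point $p_h$ not yet covered by the partial integral solution, the staircase monotonicity is used to select a bounded number of vertical segments whose combined fractional weight at the constraint for $p_h$ is at least $1/2$, and the integrally-selected segments are charged against this fractional mass. A careful accounting then shows that the number of integrally-selected segments is at most $2 \sum_j y_j^* = 2 \cdot \opt(\text{LP2})$, and the resulting integral solution is feasible for $\text{ILP2}$, yielding the claimed bound.

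The main obstacle is justifying the rounding step --- specifically, verifying that the staircase monotonicity always lets us find the required collection of segments and that the charging is consistent across iterations. We would import this structural lemma essentially verbatim from~\cite{abs-1809-09990}, since our leftward-unbounded rectangles form a sub-family of the geometric sets considered there, so their argument applies with only cosmetic changes.
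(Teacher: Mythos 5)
Your proposal is correct and takes essentially the same route as the paper: the paper proves this lemma purely by observing that ILP2 and ILP3 are exactly instances of stabbing horizontal rays with vertical segments and then citing the factor-2 integrality gap of \cite{abs-1809-09990}, which is precisely your argument (your point/leftward-unbounded-rectangle reformulation and the reflection for ILP3 just make that identification explicit). The internal rounding of \cite{abs-1809-09990} that you sketch is likewise not reproduced in the paper, so importing it as a black box matches the paper's level of detail.
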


We will use Lemma \ref{lem:rayig} to prove our approximation bound. Before that we prove the following lemma.

\begin{lemma}\label{lem:hhig}
$\opt(ILP4)\le \opt(LP4)$. 
\end{lemma}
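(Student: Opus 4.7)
The plan is to show that the constraint matrix of $LP4$ has a very simple block structure which forces integrality, so in fact $\opt(ILP4) = \opt(LP4)$. The key geometric observation is that two horizontal segments can only stab each other when they are collinear, i.e. share the same $y$-coordinate, and overlap as intervals. Combined with the hypothesis that every segment of $H$ crosses the common vertical line $L_v$, I would first argue that \emph{any} two segments of $H$ lying on the same horizontal line must intersect, because they both contain the point $L_v \cap \{y = y_i\}$. Consequently, for each $i \in H_h$ the constraint set $H^i$ coincides exactly with the set of all segments of $H$ sharing $i$'s $y$-coordinate.

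Next I would partition $H$ (and correspondingly $H_h$) by $y$-coordinate, writing $H = \bigsqcup_{y \in Y} H(y)$ and $H_h = \bigsqcup_{y \in Y_h} H_h(y)$ where $Y_h \subseteq Y$ is the set of $y$-coordinates at which $H_h$ has a segment. By the previous paragraph, the constraints of $LP4$ split into independent blocks indexed by $y \in Y_h$: each segment $i \in H_h(y)$ produces the same constraint $\sum_{j \in H(y)} y_j \ge 1$, and distinct $y$-values involve disjoint variable sets. So $\opt(LP4) \ge \sum_{y \in Y_h} 1 = |Y_h|$.

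For the integral side, I would exhibit an explicit feasible solution to $ILP4$ of cost $|Y_h|$: for each $y \in Y_h$ pick an arbitrary segment of $H(y)$ and set its variable to $1$. Since every $i \in H_h$ has some $y$-coordinate in $Y_h$ and every segment of $H(y)$ stabs $i$, this is feasible, giving $\opt(ILP4) \le |Y_h| \le \opt(LP4)$, which is the desired bound.

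The argument is essentially a one-line integrality observation once the geometric reduction is in place, so there is no real obstacle; the only thing to be careful about is justifying that collinear segments crossing $L_v$ must overlap, which follows immediately from both containing the point $L_v \cap \{y = y_i\}$. Note that the same observation even yields equality $\opt(ILP4) = \opt(LP4)$, which is stronger than what is stated, although only the stated inequality is used in the approximation analysis.
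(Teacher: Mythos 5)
Your proposal is correct and follows essentially the same route as the paper: both arguments partition $H$ by $y$-coordinate and use the fact that, since every horizontal segment crosses $L_v$, a single segment per relevant $y$-level stabs everything at that level. The paper phrases this as a blockwise rounding of an arbitrary fractional solution, while you bound $\opt(LP4)$ from below and $\opt(ILP4)$ from above by the number of relevant levels $|Y_h|$, which is just a repackaging of the same observation (and, as you note, gives equality).
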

\begin{proof}
Given a fractional solution $S$ to LP4, we show how to round it to an integral solution $S'$ such that the cost of $S'$ is at most the cost of $S$. First, we partition the set $H$ to a collection of subsets such that each subset contains line segments having the same $y$-coordinates. Then, we round the subsets independent of each other. Consider a particular subset $T$ and a line segment $h\in H_h$ whose $y$-coordinate is the same as that of the line segments in $T$. If there is no such $h$, we set the $y$-values of all line segments in $T$ to 0. Otherwise, as $S$ is a feasible solution, $\sum_{j\in T} y_j\ge 1$. We pick any line segment $h_1$ from $T$ arbitrarily, set its $y$-value to 1, and set the $y$-values of all remaining line segments in $T$ to 0. As $h_1$ hits all line segments in $T$, it is a feasible solution for this subset. Hence, the lemma follows.   
\end{proof}


\begin{lemma}
$|S_1\cup S_2\cup S_3|\le 5\cdot \opt$.
\end{lemma}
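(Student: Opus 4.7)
The plan is to chain the three bounds provided by Observation~\ref{obs:lp3bound}, Lemma~\ref{lem:rayig}, and Lemma~\ref{lem:hhig} with the LP-relaxation bound $\opt(LP1) \le \opt$. Since $S_1$, $S_2$, and $S_3$ are optimum \emph{integral} solutions to ILP2, ILP3, and ILP4 respectively, I will bound $|S_1|+|S_2|+|S_3|$ from above, which is an upper bound on $|S_1\cup S_2\cup S_3|$.

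First, I would apply Lemma~\ref{lem:rayig} to get $|S_1| = \opt(\text{ILP2}) \le 2\cdot \opt(\text{LP2})$ and $|S_2| = \opt(\text{ILP3}) \le 2\cdot \opt(\text{LP3})$, and apply Lemma~\ref{lem:hhig} to get $|S_3| = \opt(\text{ILP4}) \le \opt(\text{LP4})$. Then I would plug in Observation~\ref{obs:lp3bound} to obtain
\[
|S_1| \le 5 \sum_{j\in V(l)} y_j^*, \quad |S_2| \le 5 \sum_{j\in V(r)} y_j^*, \quad |S_3| \le 5 \sum_{j\in H} y_j^*.
\]
Summing these three inequalities gives
\[
|S_1|+|S_2|+|S_3| \le 5 \Bigl( \sum_{j\in V(l)} y_j^* + \sum_{j\in V(r)} y_j^* + \sum_{j\in H} y_j^* \Bigr) = 5\cdot \opt(\text{LP1}),
\]
where the equality uses $V = V(l) \sqcup V(r)$. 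Finally, since LP1 is a relaxation of ILP1 (which is exactly $(H\sqcup V,H)$-stabbing), $\opt(\text{LP1}) \le \opt$, and so $|S_1\cup S_2\cup S_3| \le |S_1|+|S_2|+|S_3| \le 5\cdot \opt$.

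The proof is essentially an exercise in chaining the already-proven bounds, so I do not anticipate a genuine obstacle; the only subtlety is making sure that the factor $5$ in Observation~\ref{obs:lp3bound} for the $H$-term interacts correctly with the factor $1$ from Lemma~\ref{lem:hhig} (giving $5$), while the factor $2.5$ for each of $V(l)$ and $V(r)$ combines with the factor $2$ from Lemma~\ref{lem:rayig} (also giving $5$), so that all three terms scale by the same constant and can be combined back into $5\cdot \opt(\text{LP1})$. This balance across the thresholds $2/5, 2/5, 1/5$ used in defining $H_l, H_r, H_h$ is precisely what makes the $5$-approximation work.
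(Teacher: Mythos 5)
Your proposal is correct and follows essentially the same chain of inequalities as the paper: apply Lemma~\ref{lem:rayig} and Lemma~\ref{lem:hhig} to reduce to the LP optima, then Observation~\ref{obs:lp3bound} together with the disjointness of $V(l)$, $V(r)$, $H$ to bound the sum by $5\cdot\opt(\text{LP1})\le 5\cdot\opt$. The only cosmetic difference is that you use $|S_1\cup S_2\cup S_3|\le |S_1|+|S_2|+|S_3|$ where the paper states equality (which holds since the three sets are pairwise disjoint); both are fine.
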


\begin{proof}
First note that $|S_1\cup S_2\cup S_3|=\opt(ILP2)+\opt(ILP3)+\opt(ILP4)$. From Lemmas \ref{lem:rayig} and \ref{lem:hhig}, it follows that $|S_1\cup S_2\cup S_3|\le 2\cdot\opt(LP2)+2\cdot\opt(LP3)+\opt(LP4)$. Then, as the sets $V(l), V(r)$ and $H$ are pairwise disjoint, by Observation \ref{obs:lp3bound}, $|S_1\cup S_2\cup S_3|\le 5\cdot \opt(LP1)\le 5\cdot \opt$. 
\end{proof}

To solve LP1 one can use the LP solver in \cite{Tardos86} that runs in $O(n^5)$ time. The algorithm to compute $S_1,S_2$ and $S_3$ takes $O(n^6)$ time in total \cite{KatzMN05}. Hence, our approximation algorithm runs in $O(n^6)$ time. 

\begin{theorem}
There is a 5-approximation for $(H\sqcup V,H)$-stabbing that runs in $O(n^6)$ time.  
\end{theorem}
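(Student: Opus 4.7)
The plan is to follow the LP-rounding framework already set up above: formulate the natural set-cover ILP1, solve its relaxation LP1 in $O(n^5)$ time via the solver of \cite{Tardos86}, and use the fractional optimum $y^*$ to ``guess'' which of $V(l)$, $V(r)$, or $H$ is responsible for stabbing each $i \in H$. First I would partition $H$ into $H_l, H_r, H_h$ by thresholding the three group contributions at $2/5, 2/5, 1/5$; since these thresholds sum to $1$ and each LP constraint has value at least $1$, every horizontal segment lies in at least one of the three subsets.

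Next I would solve three independent subproblems $(V(l), H_l)$-stabbing, $(V(r), H_r)$-stabbing, and $(H, H_h)$-stabbing. For the first two, because every segment of $H$ crosses $L_v$, the interaction to the left (resp.\ right) of $L_v$ reduces to the classical problem of stabbing horizontal rays by vertical segments; I would invoke the exact $O(n^6)$-time algorithm of \cite{KatzMN05} for each, and appeal to Lemma~\ref{lem:rayig} for the integrality-gap bound of $2$. For the third subproblem I would exploit that all segments cross $L_v$, so that two horizontal segments intersect iff they share a $y$-coordinate; a greedy selection per distinct $y$-coordinate then yields an integer optimum matching the LP optimum, as formalized in Lemma~\ref{lem:hhig}.

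Finally I would combine the three integer solutions and bound the total size. Scaling $y^*$ by the inverse of each threshold produces a feasible fractional solution for LP2, LP3, and LP4 respectively, so by Observation~\ref{obs:lp3bound} the combined cost is at most $2\cdot \opt(\text{LP2}) + 2\cdot \opt(\text{LP3}) + \opt(\text{LP4}) \le 5\sum_{j\in V(l)} y_j^* + 5\sum_{j\in V(r)} y_j^* + 5\sum_{j\in H} y_j^* = 5\cdot \opt(\text{LP1}) \le 5\cdot \opt$, where disjointness of $V(l), V(r), H$ is used in the equality. The running time is dominated by the three $O(n^6)$ ray-stabbing routines.

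The main technical subtlety is balancing the three thresholds. They must sum to exactly $1$ for the partition to cover all of $H$, and the per-group blow-up factors (inverses of the thresholds) must be such that when multiplied by the integrality-gap factors $2, 2, 1$ of the subproblems they all equal the same number; the choice $2/5, 2/5, 1/5$ is essentially the unique one that produces a uniform factor of $5$ against $\opt(\text{LP1})$. Any asymmetric choice either loses feasibility of the partition or inflates the approximation ratio.
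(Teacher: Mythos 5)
Your proposal matches the paper's proof essentially step for step: the same LP1 relaxation, the same $2/5,2/5,1/5$ thresholding into $H_l,H_r,H_h$, the same use of the exact ray-stabbing algorithm of \cite{KatzMN05} together with Lemma~\ref{lem:rayig}, the zero-gap argument of Lemma~\ref{lem:hhig} for the horizontal-horizontal subproblem, and the final accounting via Observation~\ref{obs:lp3bound} and disjointness of $V(l),V(r),H$. The argument and the claimed $O(n^6)$ running time are correct as written.
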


\subsection{A $\ptas$ for $(H\sqcup V,H)$-stabbing}
\label{sec:ptas}
We would like to hit all the line segments of $H$ using a minimum size subset of $H\sqcup V$. We design a local search based $\ptas$ for this problem. Note that if a horizontal line segment $h_1$ in a solution hits a horizontal line segment, then they must have the same $y$-coordinates, and $h_1$ hits all the line segments that have the same $y$-coordinates as $h_1$. Thus, for a subset $S$ of horizontal line segments with the same $y$-coordinates, we can identify one of them (say $h$) and assume that if the line segments in $S$ are being hit by a horizontal line segment, then that line segment is $h$. We preprocess the set $H$ to compute a subset $H'$ so that for each subset $S$ of horizontal line segments with the same $y$-coordinates, $|H'\cap S|=1$. Thus, now we would like to find a minimum size subset of  $H'\sqcup V$ that hit all the line segments in $H$.

We use a standard local search algorithm as the one in \cite{BandyapadhyayM018}, which runs in $n^{O(\frac{1}{\epsilon^2})}$ time. Let $\ml$ (blue) and $\g$ (red) be the local and global optimum solutions, respectively and w.l.o.g., assume $\ml\cap \g =\emptyset$. Then, by Theorem \ref{thm:LSgivesPTAS}, to say that the local search algorithm is a $\ptas$ it is sufficient to prove the existence of a bipartite local exchange graph $G=(\ml, \g,E)$ that is also planar. Note that $G=(\ml,\g,E)$ is a local exchange graph if for each $h\in H$, there exists $s_1\in \ml, s_2\in \g$ such that $s_1\cap s_2\cap h\ne \emptyset$ and $(s_1,s_2)\in E$. We will construct a plane graph $G=(\ml,\g,E)$ in the following, which satisfies the local exchange property. 

For each line segment $h \in H'\cap (\ml\cup \g)$, we select the intersection point $h\cap L_v$ to draw the vertex for $h$. For each line segment $v\in V\cap (\ml\cup \g)$, we select $v$ itself to draw the vertex for $v$. Later we will contract each such $v$ to a single point. For each red (resp. blue) $h\in H'\cap \g$ (resp. $h\in H'\cap \ml$), let $v_1$ and $v_2$ be the first line segments of color blue (resp. red) on the left and right of $L_v$, respectively that intersect the line-extension $L(h)$. Note that $v_1$ and $v_2$ might not exist. We add two edges $(h,v_1)$ and $(h,v_2)$ (or horizontal line segments), between $h\cap L_v$ and $I(L(h),v_1)$ and between $h\cap L_v$ and $I(L(h),v_2)$, respectively. For each  $h\in H'\setminus (\ml\cup \g)$, let $V'$ be the line segments in $\g\cup \ml$ that intersect $L(h)$. We add an edge between each consecutive (w.r.t. $V'$) pair of line segments $(v,v')$ such that $v\in \g$ and $v'\in \ml$. In particular, we draw a horizontal line segment between $I(L(h),v)$ and $I(L(h),v')$. 

Note that each edge of $G$ is a horizontal line segment and any pair of those can be drawn in a non-overlapping manner. Also, an input vertical segment can intersect any such edge only at one of its vertices. Thus, the planarity of the graph follows. Lastly, we contract each vertical line segment $v$ to a point, which does not violate the planarity. Now, consider any line segment $h_1\in H$. Suppose $h_1$ is hit by a horizontal line segment $h\in \g\cup \ml$. W.l.o.g., let $h\in \g$. Let $v_1$ and $v_2$ be the first line segments of color blue on the left and right of $L_v$, respectively that intersect $L(h)$. Then, either $v_1$ or $v_2$ must hit $h_1$, as $h_1$ intersects the line $L_v$. As we add the edges $(h,v_1)$ and $(h,v_2)$, the local exchange property holds for $h_1$. Now, suppose $h_1$ does not get hit by a horizontal line segment in $\g\cup \ml$. Let $V'$ be the line segments in $\g\cup \ml$ that intersect $L(h_1)$. Then, there must be two consecutive (w.r.t. $V'$) vertical line segments $v\in \g$ and $v'\in \ml$ both of which hit $h_1$. As we add the edge $(v,v')$, the local exchange property holds for $h_1$ in this case as well. It follows that the local search algorithm is a $\ptas$, and we obtain the following theorem. 

\begin{theorem}\label{thm:hvandh}
There is a $(1+\epsilon)$-approximation for $(H\sqcup V,H)$-stabbing that runs in $n^{O(\frac{1}{\epsilon^2})}$ time for any $\epsilon > 0$.  
\end{theorem}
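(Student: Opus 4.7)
The plan is to apply the local search PTAS framework of Theorem~\ref{thm:LSgivesPTAS}. The algorithm maintains a feasible cover $\mathcal{B}\subseteq H'\sqcup V$, and while there exists a subset $\mathcal{B}'\subseteq\mathcal{B}$ of size at most $k=c/\epsilon^2$ and a strictly smaller $M\subseteq(H'\sqcup V)\setminus\mathcal{B}$ such that $(\mathcal{B}\setminus\mathcal{B}')\cup M$ still hits $H$, it performs the swap. Feasibility is checked in polynomial time by scanning $H$, and the total running time is $n^{O(1/\epsilon^2)}$. A preprocessing step would first collapse each group of horizontal segments sharing a $y$-coordinate to a single representative, forming the pool $H'$: this is justified because any horizontal solution segment stabs every input horizontal segment at its own $y$-coordinate, so only one per $y$-coordinate is ever useful. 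Keeping $y$-coordinates of $H'$ distinct will be critical for planarity.

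By Theorem~\ref{thm:LSgivesPTAS}, the remaining task is to exhibit, for any disjoint local optimum $\mathcal{B}$ and global optimum $\mathcal{R}$, a planar bipartite graph $G=(\mathcal{B},\mathcal{R},E)$ with the local exchange property: every $h\in H$ is stabbed by some pair $b\in\mathcal{B}, r\in\mathcal{R}$ with $(b,r)\in E$. I would embed $G$ by drawing each horizontal solution segment $h\in H'$ as a single point at $h\cap L_v$, and each vertical solution segment $v\in V$ as its own geometric copy (then contracting $v$ to a point at the end, which does not disturb planarity). The decisive structural feature here is the common vertical line $L_v$: every horizontal segment of $\mathcal{B}\cup\mathcal{R}$ is anchored to $L_v$, so edges leaving such a vertex can safely be routed along the line-extension $L(h)$.

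Edges fall into two classes. For each $h\in H'\cap(\mathcal{B}\cup\mathcal{R})$, say $h$ is red, I would locate the nearest blue vertical segments $v_1,v_2$ on the left and right of $L_v$ along $L(h)$ (if they exist) and draw edges $(h,v_1)$ and $(h,v_2)$ as horizontal subsegments of $L(h)$; this guarantees that every input $h_1$ covered by $h$ in the red solution is also covered by either $v_1$ or $v_2$ in the blue solution, because the blue solution must stab $h_1$ via a segment passing through or crossing $L_v$. For each $h\in H$ that is not stabbed by any horizontal segment of $\mathcal{B}\cup\mathcal{R}$, both solutions must stab it with vertical segments, and I would add an edge between every consecutive opposite-color pair $(v,v')$ along $L(h)$, with $v\in\mathcal{R}$ and $v'\in\mathcal{B}$; the pair bracketing the actual intersection with $h$ then witnesses local exchange.

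The main obstacle will be verifying planarity of the resulting drawing. Every edge lies on some horizontal line-extension $L(h)$, and since $H'$ has distinct $y$-coordinates, edges on different line-extensions occupy different horizontal lines and cannot cross. Edges along the same $L(h)$ are, by construction, drawn between consecutive vertices of $G$ along that line, so they do not overlap. The only remaining worry is that a horizontal edge drawn along $L(h)$ might pierce a vertical segment of $\mathcal{B}\cup\mathcal{R}$ at an interior point; but the endpoints of each edge are chosen precisely so that any such meeting happens at a vertex of $G$. Contracting each vertical segment to a single point then preserves planarity, and together with the local exchange property verified above, Theorem~\ref{thm:LSgivesPTAS} delivers the stated $(1+\epsilon)$-approximation in $n^{O(1/\epsilon^2)}$ time.
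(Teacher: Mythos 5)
Your proposal is correct and follows essentially the same route as the paper: preprocessing $H$ to one representative per $y$-coordinate, running local search, and building the planar exchange graph by placing horizontal solution segments at their intersection with $L_v$, connecting each such segment to the nearest opposite-color verticals on either side along its line-extension, adding consecutive opposite-color vertical pairs along the remaining line-extensions, and contracting verticals to points. The planarity and exchange-property arguments match the paper's as well.
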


\paragraph{Remark.} One can show that $(H\sqcup V,H)$-stabbing is a special case of $(V,H)$-stabbing. Construct an instance $I'$ of $(V,H)$-stabbing from any given instance $I$ of $(H\sqcup V,H)$-stabbing by taking the vertical and horizontal segments in $I$ along with some special vertical segments. For each maximal cluster of horizontal segments having same $y$-coordinates, add one special vertical segment such that the only segments it intersects are the segments in the cluster. Then, given a solution for one instance, a solution for the other of the same size can be computed in polynomial time. As $(V,H)$-stabbing admits a PTAS \cite{BandyapadhyayR17}, one can also obtain a PTAS for $(H\sqcup V,H)$-stabbing using this alternative approach.

\section{$(H\sqcup V,V)$-Stabbing and $(H\sqcup V,H\sqcup V)$-Stabbing}
\label{sec:approx}
In this section, we obtain a 2-approximation for $(H\sqcup V,V)$-stabbing, and then show how this algorithm along with the algorithms of the previous section can be used to obtain approximations for $(H\sqcup V,H\sqcup V)$-stabbing.

\subsection{A 2-approximation for $(H\sqcup V,V)$-stabbing}
First, we assume that the line segments of $V$ are lying only on the right side of $L_v$. We design a polynomial time algorithm for this case, and later we will show how to obtain a 2-approximation for the general case of $(H\sqcup V,V)$-stabbing by using this algorithm as a subroutine. Now, as the line segments of $V$ are lying only on the right side of $L_v$, w.l.o.g., we can assume that the $y$-coordinates of the line segments in $H$ are distinct. Also, for simplicity, we assume that the endpoints of all the line segments are distinct.

Our algorithm is based on dynamic programming. Let $h_1,h_2,\ldots,h_t$ be the line segments in $H$ in increasing order of their $y$-coordinates. Also, let $v_1,v_2,\ldots,v_m$ be the line segments in $V$ in non-decreasing order of their $x$-coordinates. If two vertical line segments have the same $x$-coordinates, order them in decreasing order of $y$-coordinates of their bottom endpoints. Now, we describe the subproblems we consider. Each subproblem $(i,j,k,k')$ is defined by four indexes $i$, $j$, $k$ and $k'$, where $1\le j\le i \le t$, and $1\le k,k' \le m$. Let $H(i,j)=\{h_i,h_{i-1},\ldots,h_j\}$.
Also, let $V_l=\{v_1,v_2,\ldots,v_l\}$, and $V(i,j,l)$ be the line segments of $V_l$ each of which intersects at least one line segment in $H(i,j)$ and does not intersect $h_{i+1}$ or $h_{j-1}$ (if exists). In subproblem $(i,j,k,k')$, we would like to hit $V(i,j,k')$ using a minimum size subset of $H(i,j)\sqcup V_k$ (see Figure \ref{fig:dp}). We define $f(i,j,k,k')$ to be the size of an optimum solution of the subproblem $(i,j,k,k')$. Note that we are interested in computing $f(t,1,m,m)$.

\begin{figure}
\centering
\includegraphics[width=0.55\textwidth]{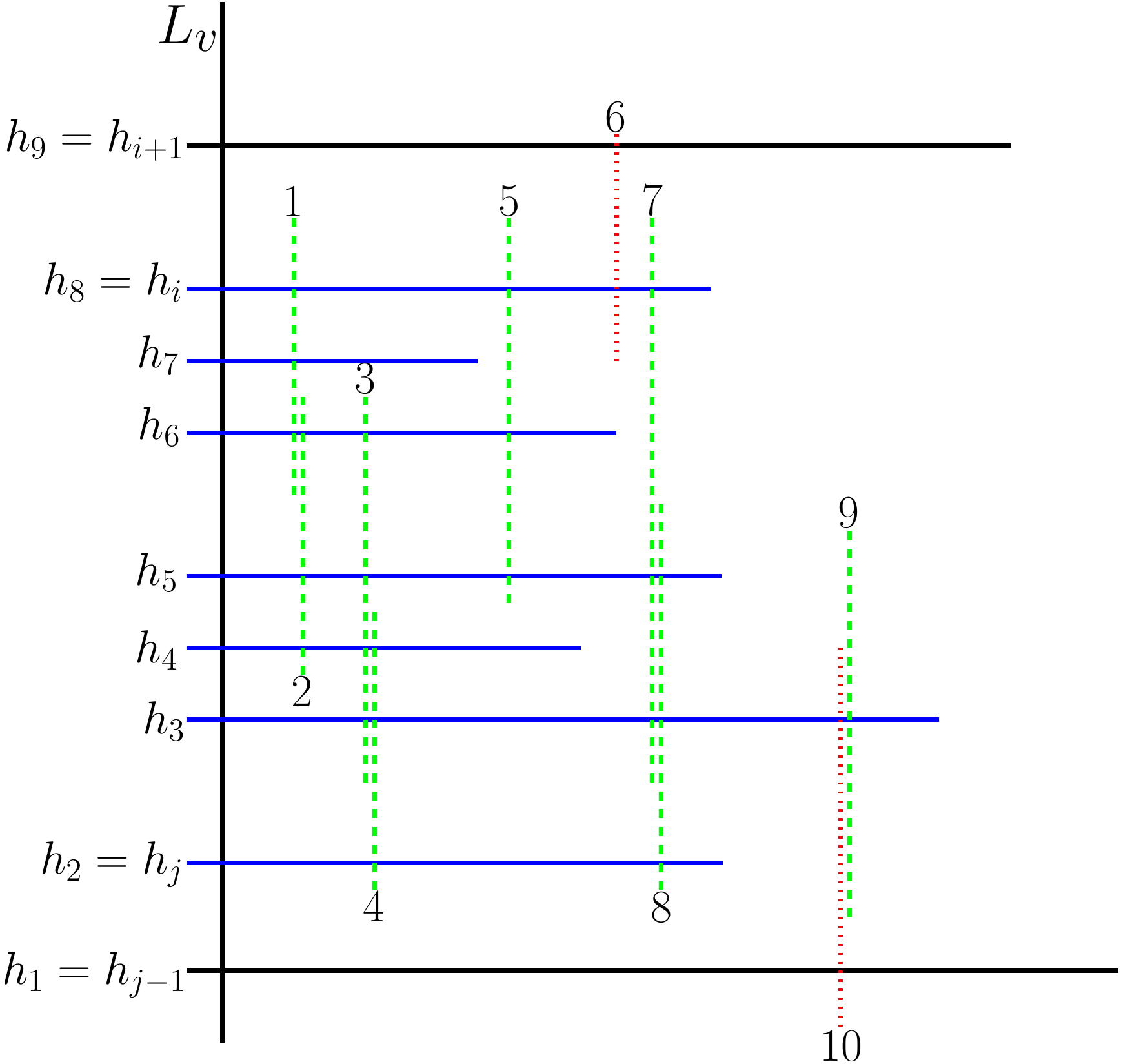}
\caption{An example demonstrating a subproblem $(i,j,k,k')$ with $i= 8, j= 2, k= 10$ and $k'=9$. The line segments in $V(i,j,k')$ are shown using dashed (green) line segments. The dotted (red) line segments either intersect $h_{i+1}$ or has index more than 9, and thus are not in $V(i,j,k')$.}
\label{fig:dp}
\end{figure}

We use the following recursive structure to compute $f(i,j,k,k')$. Let $v$ be the line segment in $V(i,j,k')$ having the maximum index. Now, there can be two cases. $v$ gets hit by a horizontal line segment in optimum solution or $v$ gets hit by only vertical line segments in the optimum solution. For the first case, we guess a line segment $h_{i'}$ that hits $v$. Let $k_1'$ be the maximum index of the vertical line segments in $V(i,j,k')$ that intersect at least one line segment in $H(i,i'+1)$ and do not intersect $h_{i+1}$ or $h_{i'}$. Similarly, let $k_2'$ be the maximum index of the vertical line segments in $V(i,j,k')$ that intersect at least one line segment in $H(i'-1,j)$ and do not intersect $h_{i'}$ or $h_{j-1}$. Then, the two new subproblems that we need to solve are $(i,i'+1,k,k_1')$ and $(i'-1,j,k,k_2')$. For the second case, we guess the maximum index $l$ of the line segments in $V_k$ that hit $v$ in the optimum solution. Let $k_3'$ be the maximum index of the vertical line segments in $V(i,j,k')$ that does not intersect $v_l$. From the definition of $v$ and the fact that $x$-coordinates of $v$ and $v_l$ are same, it follows that there is no line segment in $V(i,j,k')$ with index larger than $k_3'$ that does not intersect $v_l$. Thus, the new subproblem that we need to solve is $(i,j,l-1,k_3')$. 
Note that the number of guesses for $h_{i'}$ and $v_l$ is $O(n)$. Thus, to solve $(i,j,k,k')$ the total number of subproblems that we need to solve is $O(n)$. Using a dynamic programming based scheme these subproblems can be evaluated easily. Since the number of distinct subproblems $(i,j,k,k')$ is $O(n^4)$, all the subproblems can be solved in $O(n^5)$ time.

\paragraph{$(H\sqcup V,V)$-stabbing.} 
Let $V(l)$ and $V(r)$ be the vertical line segments that lie on the left and right of $L_v$, respectively. Set $S=H\sqcup V$, $D_1=V(l)$ and $D_2=V(r)$. Then, note that the problem $(S,D_1)$-stabbing is same as $(H\sqcup V(l),D_1)$-stabbing. Similarly, $(S,D_2)$-stabbing is same as $(H\sqcup V(r),D_2)$-stabbing. Thus, we can use the above algorithm to solve these two problems, and hence there exists exact algorithms for $(S,D_1)$-stabbing and $(S,D_2)$-stabbing. From Observation \ref{obs:merge}, we obtain the following theorem.
\begin{theorem}
There is a 2-approximation for $(H\sqcup V,V)$-stabbing that runs in $O(n^5)$ time.
\end{theorem}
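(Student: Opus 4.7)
The plan is to reduce $(H\sqcup V,V)$-stabbing to two instances of the one-sided case (where $D$ lies entirely on one side of $L_v$), for which the dynamic program developed earlier in this section gives an exact algorithm in $O(n^5)$ time. Combining the two resulting solutions via Observation~\ref{obs:merge} will yield a $2$-approximation.

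First, I would partition $D=V$ as $V=V(l)\sqcup V(r)$ according to whether each vertical segment lies to the left or to the right of $L_v$. The key structural observation is that a vertical segment in $V(r)$ cannot stab a vertical segment in $V(l)$: segments in $V(l)$ and $V(r)$ live in disjoint open half-planes $x<x_{L_v}$ and $x>x_{L_v}$, so two such vertical segments have disjoint $x$-coordinates and hence disjoint point sets. (Here I implicitly assume the segments of $V$ are strictly on one side of $L_v$; any vertical segment meeting $L_v$ can be placed into either class without changing the argument.) Consequently, only the segments in $H\sqcup V(l)$ can contribute to stabbing anything in $V(l)$, which means $(H\sqcup V,\,V(l))$-stabbing has exactly the same optimum as $(H\sqcup V(l),\,V(l))$-stabbing, and analogously on the right.

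Second, each of the two resulting instances is a one-sided instance of the form just solved by the DP: all vertical segments in $D$ lie on one side of $L_v$, and the stabbing set is $H$ together with the verticals on that same side. Hence the DP computes optimal solutions $S_1$ and $S_2$ for $(H\sqcup V(l),\,V(l))$-stabbing and $(H\sqcup V(r),\,V(r))$-stabbing, respectively, each in $O(n^5)$ time. Applying Observation~\ref{obs:merge} with $\alpha=\beta=1$, $f(n)=g(n)=O(n^5)$, and $D_1=V(l)$, $D_2=V(r)$ immediately produces a $2$-approximation running in $O(n^5)$ total time.

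The one place where I need to be careful is the reduction from $(H\sqcup V,\,V(l))$-stabbing to $(H\sqcup V(l),\,V(l))$-stabbing, since the availability of segments in $V(r)$ could in principle shrink the optimum value. The argument above rules this out because segments of $V(r)$ are useless for stabbing $V(l)$, so any feasible solution can be assumed to use only segments from $H\sqcup V(l)$ when targeting $V(l)$ (and symmetrically for $V(r)$). Everything else is a routine bookkeeping step and a direct appeal to Observation~\ref{obs:merge}.
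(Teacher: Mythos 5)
Your proposal is correct and follows essentially the same route as the paper: split $V$ into $V(l)$ and $V(r)$, note that verticals on the opposite side of $L_v$ are useless for stabbing each side (so each subproblem reduces to the one-sided case solved exactly by the dynamic program), and combine via Observation~\ref{obs:merge} with $\alpha=\beta=1$. Your explicit justification that $V(r)$ cannot stab $V(l)$ is a point the paper leaves implicit, but the argument is the same.
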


\paragraph{$(H\sqcup V,H\sqcup V)$-stabbing.}
Set $S=H\sqcup V$, $D_1=H$ and $D_2=V$. Then, we know that there are two algorithms for the $(S,D_1)$-stabbing problem: an $O(n^6)$-time 5-approximation algorithm and a $\ptas$. Moreover, we have an $O(n^5)$-time 2-approximation algorithm for the $(S,D_2)$-stabbing problem. Therefore, by Observation~\ref{obs:merge}, we have the following theorem.
\begin{theorem}
There is a 7-approximation for $(H\sqcup V,H\sqcup V)$-stabbing that runs in $O(n^6)$ time and a $(3+\epsilon)$-approximation that runs in $n^{O(\frac{1}{\epsilon^2})}$ time for any $\epsilon > 0$.  
\end{theorem}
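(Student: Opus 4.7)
The plan is to obtain this theorem as a direct corollary of the combining lemma (Observation~\ref{obs:merge}) together with the algorithms already developed for the two ``sub-stabbing'' tasks. Since $D = H \sqcup V$ is literally a disjoint union of the two segment types, I would set $S = H \sqcup V$, $D_1 = H$, and $D_2 = V$, noting that these satisfy $D_1 \cap D_2 = \emptyset$ and $D = D_1 \sqcup D_2$, which is precisely the setting assumed by Observation~\ref{obs:merge}.

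For the first bullet (the $7$-approximation), I would invoke the $5$-approximation for $(H\sqcup V, H)$-stabbing running in $O(n^6)$ time (from Section~\ref{sec:HUV,H}) as algorithm $A_1$, and the $2$-approximation for $(H\sqcup V, V)$-stabbing running in $O(n^5)$ time (from the previous subsection) as algorithm $A_2$. Observation~\ref{obs:merge} with $\alpha = 5$ and $\beta = 2$ then immediately yields a feasible solution to $(H\sqcup V, H\sqcup V)$-stabbing of size at most $7\cdot\opt$, in time $O(n^6) + O(n^5) = O(n^6)$.

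For the second bullet (the $(3+\epsilon)$-approximation), I would instead instantiate $A_1$ with the $\ptas$ for $(H\sqcup V, H)$-stabbing from Theorem~\ref{thm:hvandh}, which delivers a $(1+\epsilon)$-approximation in $n^{O(1/\epsilon^2)}$ time, and keep $A_2$ as the $2$-approximation for $(H\sqcup V, V)$-stabbing. Observation~\ref{obs:merge} with $\alpha = 1+\epsilon$ and $\beta = 2$ then gives a $(3+\epsilon)$-approximation running in $n^{O(1/\epsilon^2)} + O(n^5) = n^{O(1/\epsilon^2)}$ time.

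Because Observation~\ref{obs:merge} already handles both the feasibility of the union of the two solutions (each $D_i$ is independently stabbed) and the additivity of the approximation ratios (using that $\opt(S, D_i) \le \opt(S, D)$ for $i = 1, 2$), there is no genuine obstacle: the only thing to check is that plugging in the quoted running times and ratios produces the claimed bounds, which is a one-line arithmetic verification in each of the two cases.
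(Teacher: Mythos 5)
Your proposal is correct and is essentially identical to the paper's own proof: both apply Observation~\ref{obs:merge} with $S=H\sqcup V$, $D_1=H$, $D_2=V$, combining the $5$-approximation (respectively the $\ptas$) for $(H\sqcup V,H)$-stabbing with the $2$-approximation for $(H\sqcup V,V)$-stabbing to get the $7$- and $(3+\epsilon)$-approximations with the stated running times.
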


\bibliographystyle{plain}
\bibliography{ref}

\end{document}